\title{An easier way to compute 2-cocycles coming from a reduction for semidirect products}
\author{Viacheslav Goncharov\footnote{IST Austria -- Institute of Science and Technology Austria, 3400 Klosterneuburg, viacheslav.goncharov@ist.ac.at}}
\newcommand\restr[2]{{
  \left.\kern-\nulldelimiterspace 
  #1 
  \littletaller 
  \right|_{#2} 
  }}
\newcommand{\R}{\mathbb{R}}
\newcommand{\g}{\mathfrak{g}}
\newcommand{\h}{\mathfrak{h}}
\newcommand{\reallywidehat}[1]{%
\savestack{\tmpbox}{\stretchto{%
  \scaleto{%
    \scalerel*[\widthof{\ensuremath{#1}}]{\kern-.6pt\bigwedge\kern-.6pt}%
    {\rule[-\textheight/2]{1ex}{\textheight}}
  }{\textheight}%
}{0.5ex}}%
\stackon[1pt]{#1}{\tmpbox}%
}
\newtheorem{theorem}{Theorem}
\newtheorem{lemma}{Lemma}
\newtheorem{proposition}{Proposition}
\numberwithin{equation}{section}
\begin{document}

\maketitle

\begin{abstract}
    For Hamiltonian actions of semidirect products $G=F \ltimes H$, we study 2-cocycles arising from residual Hamiltonian actions of $F$ on Hamiltonian reductions for $H$. The motivation comes from the study of Teichmüller spaces for surfaces with boundary, which carry Hamiltonian actions of the Virasoro algebra. In this paper, we give a general setup for the problem, and we suggest an easier way to obtain the Gelfand-Fuchs 2-cocycles for Hamiltonian actions on Teichmüller spaces.
\end{abstract}

    

\tableofcontents



\newpage


\section{Intoduction}

Recall that a Hamiltonian action of a connected Lie group $G$ on a symplectic manifold $\mathcal{M}$ induces a (trivial or nontrivial) central extension of the corresponding Lie algebra $\mathfrak{g}={\rm Lie}(G)$.  In this paper, we study Hamioltonian actions of semi-direct products $G = F \ltimes H$ on symplectic manifolds. We denote the corresponding Lie algebras by $\mathfrak{f}={\rm Lie}(F)$ and $\mathfrak{h}={\rm Lie}(H)$, respectively. If the momentum maps are not equivariant, the Lie algebras $\mathfrak{g}$ and $\mathfrak{h}$ may carry central extensions induced by the Hamiltonian action. 

For more details on reduction under the action of semi-direct products and of the corresponding central extensions, see \cite{SympRedForSDP}, \cite{CzechJofPh}.

For a $G$-Hamiltonian space $\mathcal{M}$, we consider the reduced space $\mathcal{M}_{\rm red}=\mathcal{M}//_{(\eta, a)}H = (\mathcal{M}\times\mathcal{O}^G_{(\eta, a)})\sslash_0 H$ under the $H$-action (the construction is also known as shifting trick). Here $a\in \mathbb{R}$ is the level of the central extension,  $\eta \in \mathfrak{g}^*$, and by abuse of notation we denote by the same letter its restriction to $\mathfrak{h}$. Furthermore, we denote by $\mathfrak{g}_{(\eta, a)}$ the stabiliser of $(\eta, a)$ under the (affine) coadjoint action of $\mathfrak{g}$.

Under an important (and restrictive) assumption
\begin{equation}
    \mathfrak{g} = \mathfrak{g}_{(\eta, a)} \oplus \mathfrak{h},
\end{equation}
the reduced space $\mathcal{M}_{\rm red}$ carries a Hamiltonian action of $F$, and this action may induce a non-trivial central extension of $\mathfrak{f}$. Our first main result (Theorem \ref{MainTh}) is that this central extension can be easily computed by restricting the central extension of $\mathfrak{g}$ induced by the Hamiltonian action on $\mathcal{M}$ to $\mathfrak{g}_{(\eta, a)} \cong \mathfrak{h}$. This allows one to avoid direct computation for an action on $\mathcal{M}_{\rm red}$.

Motivation for this work comes from the study of  infinite-dimensional  Teichmüller spaces associated to an oriented surfaces with ideal boundary (see \cite{AlArt}). These spaces may be obtained by Hamiltonian reduction under the loop group $H=LB$, where $B \leq {\rm PSL}(2, \mathbb{R})$ is the Borel subgroup. In this case, the group $F={\rm Diff}^+(S^1)$ is the group of orientation preserving diffeomorphisms of the circle, and the Teichmüller space carries a Hamiltonian action of the Virasoro algebra. Our second main result is a new computation of this central extension, which turns out to be easier than the original computation in \cite{AlArt}. For more information on infinite-dimensional groups and their coadjoint orbits, see \cite{SomeAppPh}, \cite{GeomInfDimGr}, \cite{Kirillov}, \cite{Witten}, \cite{CzechJofPh}.

The structure of the paper is as follows.
In Section 2,  we briefly recall the necessary background material and compute the central extension of $\mathfrak{f}$ corresponding to the Hamiltonian action of $F$ on $\mathcal{M}_{\rm red}$.
In Section 3, we set $G ={\rm Diff}^+(S^1) \ltimes  LB$, and we consider three examples of Hamiltonian $G$-actions. The first family of examples are coadjoint  orbits of the canonical central extension of the loop group $L({\rm PSL}(2, \mathbb{R}))$. In this case, our procedure gives rise to a version of the Drinfeld-Sokolov reduction \cite{DrinSok}. In the second example, we consider moduli spaces of flat ${\rm PSL}(2, \mathbb{R})$-connections on surfaces with boundary, and in the third example infinite-dimensional Teichmüller spaces associated to surfaces with ideal boundaries (see \cite{AlArt}). The last example is the main motivation for this paper. In particular, it justifies the choice for the group $G ={\rm Diff}^+(S^1) \ltimes  LB$.




\paragraph{Acknowledgments.} This work is based on my Master thesis at the University of Geneva. I am grateful to my supervisor Anton Alekseev for setting up the problem, and for his guidance. I would like to thank Damien Calaque for useful discussions, and Rea Dalipi for helpful remarks on the text of the article.

\section{Introducing the problem}\label{Sec2}


\subsection{Introductory definitions and the setup}

Here we recall necessary definitions for our problem. The material is classical and we refer to \cite{GeomInfDimGr}, \cite{SomeAppPh}, \cite{KirOrbMeth} for the details. 

Let $G$ be a Lie group and $\g$ its Lie algebra (in our examples they will be infinite-dimensional). Suppose there is a Hamiltonian action of $G$ on a symplectic manifold $(\mathcal{M}, \omega)$ (this will also be infinite-dimensional), i.e. there is a momentum map

\begin{equation}
        \mu\colon\mathcal{M}\longrightarrow\g^*
\end{equation}
such that $\iota_x\omega_m = d\langle\mu(m), x\rangle$ for any $x\in\g$, where $m\in\mathcal{M}$.

There is coadjoint action of the group $G$ on $\g^*$. The momentum map might not be equivariant. Non-equivariancy is measured by a 2-cocycle on the Lie algebra. The 2-cocycle is defined in the following way: 

\begin{equation}
    c(x, y) = \{\langle\mu, x\rangle,\langle\mu, y\rangle\} - \langle\mu, [x, y]\rangle,
\end{equation}
where $\{\cdot, \cdot\}$ are Poisson brackets and $x, y\in\g$. However, for our purposes, it is more convenient to use the following equivalent formula for the 2-cocycle: 

\begin{equation}\label{DefOfCocycle}
    c(x, y) = \mathcal{L}_x\langle\mu, y\rangle - \langle\mu, [x, y]\rangle,
\end{equation}
where $\mathcal{L}_x$ is the Lie derivative. A priori, $c(x, y)$ is a function on $\mathcal{M}$ but one can show that it is, in fact, a constant. 
Such 2-cocycle defines a central extension $\widehat{\g} = \g\oplus\R$ of the Lie algebra $\g$, i.e. there is a short exact sequence of Lie algebras

\begin{equation}
0\longrightarrow\R\longrightarrow\widehat{\g}\longrightarrow\g\longrightarrow 0,
\end{equation}
and a bracket on $\widehat{\g}$ is given by

\begin{equation}
    [(x, t_1), (y, t_2)]_{\widehat{\g}} = ([x, y]_{\g}, c(x, y)).
\end{equation}

We recall that central extensions of a Lie algebra are classified by its second cohomology group with coefficients in the trivial module $\R$. Two 2-cocycles $c_1, c_2\in\Lambda^2\g^*$ represent the same cohomology class if and only if $c_1 - c_2 = d\beta$, where $\beta\in\g^*$ and $d\beta\in\Lambda^2\g^*$ is defined by $d\beta(x, y) = \beta([x, y])$.

Now suppose that the group $G$ is a semidirect product of groups $H\trianglelefteq G$ and $F = G/H$, i.e. there is a short exact sequence of Lie groups with a section $s\colon F\rightarrow G$: 

\begin{center}
    \begin{tikzcd}
        1\arrow[r] & H \arrow[r] & G\arrow[r, yshift=-0.5ex] & F \arrow[l, yshift=0.5ex, "s" description]\arrow[r] & 1
    \end{tikzcd}    
\end{center}
In other words, $G = HF$ and $H\cap F = \{e\}$. We will also write $G = F\ltimes H$. We denote by $\h\trianglelefteq\g$ and $\mathfrak{f} = \g/\h$ Lie algebras corresponding to $H$ and $F$.

\subsection{Description of the problem}

A dual space $\widehat{\g}^*$ to the centrally extended Lie algebra $\widehat{\g}$ is given by $\widehat{\g}^* = \{(\eta, a)\ \ \vert \ \ \eta\in\g^*, a\in\R\}$ with a pairing $\langle(\eta, a), (x, t)\rangle = \langle\eta, x\rangle + at$. The group $G$ acts on $\widehat{\g}^*$ by coadjoint action: $\text{Ad}^*_g(\eta, a) = (\text{Ad}^*_g\eta, a)$ (strictly speaking, one should consider an action of a central extension $\widehat{G}$ of the group $G$, but the centre acts trivially, so we omit writing it). An infinitesimal version of this action is the coadjoint action of the corresponding Lie algebra: $\langle\text{ad}^*_y (\eta, a), (x, t)\rangle = \langle(\eta, a), [(x, t), (y, 0)]_{\widehat{\g}}\rangle$.

Take a functional $(\eta, a)\in\widehat{\g}^*$. Coadjoint orbit $\mathcal{O}_{(\eta, a)}^G\subseteq\widehat{\g}^*$ of the functional is naturally a symplectic manifold with the standard KKS symplectic form (for details see \cite{KirReps}):

\begin{equation}
    \omega^{KKS}_{(\eta, a)}(\text{ad*}_{(x, t_1)} (\eta, a), \text{ad*}_{(y, t_2)} (\eta, a)) = \langle(\eta, a), [(x, t_1), (y, t_2)]_{\widehat{\g}}\rangle,
\end{equation}

An action of the group $G$ on the orbit is Hamiltonial with a momentum map given by an inclusion $i\colon \mathcal{O}_{(\eta, a)}^G\hookrightarrow\widehat{\g}^*$. Thus, we can consider a reduced space $\mathcal{M}_{\eta}$ defined by

\begin{equation}
    \mathcal{M}_{\eta} = (\mathcal{M}\times\mathcal{O}_{\eta}^G)\sslash_0 H,
\end{equation}
where $H$ acts diagonally on $\mathcal{M}\times\mathcal{O}_{\eta}^G$ (this procedure is also known as "shifting trick"). 

There is a residual action of $F = G/H$ on $\mathcal{M}_{\eta}$. This is a general fact: if $G$ acts on $X$, then for $H\trianglelefteq G$ there is an action $G/H\curvearrowright X/H$. 

This action is still Hamiltonian. Indeed, consider the following diagram: 

\begin{center}
        \begin{tikzcd}
            G\curvearrowright & \mathcal{M}\times\mathcal{O}^G_{\eta}\arrow[r, "\widetilde{\mu}"] & \g^*\arrow[d, equal]\\
            G\curvearrowright & \widetilde{\mu}^{-1}(0)\arrow[d, two heads, "p"]\arrow[u,"i", hook]\arrow[r, "\widetilde{\mu}"] & \g^*\\
            G/H\curvearrowright & \mathcal{M}_{\eta}\arrow[r, "\mu_{res}"] & (\g/\h)^*\arrow[u, "pr^*", tail]
        \end{tikzcd}
    \end{center}
We need to show that $\mu_{res}$ is a momentum map. We are given the following: 
\begin{enumerate}
    \item $(\mathcal{M}\times\mathcal{O}^G_{\eta}, \widetilde{\omega})$ and $(\mathcal{M}_{\eta}, \omega_{red})$ are symplectic manifolds and $p^*\omega_{red} = i^*\widetilde{\omega}$.
    \item $\widetilde{\mu}$ is a momentum map, i.e. $d\langle\widetilde{\mu}, x\rangle = \iota_x\widetilde{\omega}$ for $x\in\g$.
    \item $i^*\widetilde{\mu} = pr^*\circ\mu_{res}\circ p$
\end{enumerate}
We have $d\langle pr^*\circ\mu_{res}\circ p, x\rangle = \iota_x i^*\widetilde{\omega} = p^*\iota_{pr(x)}\omega_{red}$. Since $p^*$ is injective we obtain that $\mu_{res}$ is a momentum map. This momentum map might still have a nontrivial 2-cocycle  $c_{res}\in\Lambda^2(\g/\h)^*$. It turns out that in concrete examples $c_{res}$ is hard to compute due to the complexity of the construction of $\mathcal{M}_{\eta}$. The problem is to simplify the way to obtain this 2-cocycle.

\subsection{Special condition}


\subsubsection{Intermediate results}

We claim that if we pick up a functional $\eta\in\g^*$ such that 

\begin{equation}\label{TheCond}
    \g_{(\eta, a)} \oplus \h = \g,
\end{equation} 
then the desired 2-cocycle is the same as for an action of $G_{(\eta, a)}\leq G$. Here $\g_{(\eta, a)}\leq\g$ is a stabiliser of $(\eta, a)$ under the coadjoint action of Lie algebra $\g$. In Section \ref{Sec3} we give an example of such functional. First, let us note one consequence of this condition. 

\begin{lemma}
    Suppose that the groups $G, H, F$ and $G_{(\eta, a)}$ are connected and simply connected, and the functional $\eta\in\g^*$ satisfies (\ref{TheCond}). Then $G = G_{(\eta, a)}H = HG_{(\eta, a)}$.
\end{lemma}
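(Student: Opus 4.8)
The plan is to realize the decomposition at the group level via the multiplication map together with an open-subgroup argument. I would define $m\colon G_{(\eta, a)}\times H\to G$ by $m(k,h)=kh$, whose image is the set $G_{(\eta, a)}H$. Before analyzing $m$, I would first record that $G_{(\eta, a)}H=HG_{(\eta, a)}$ is automatic because $H$ is normal in $G$: for $h\in H$ and $k\in G_{(\eta, a)}$ one writes $hk=k(k^{-1}hk)$ with $k^{-1}hk\in H$, and symmetrically $kh=(khk^{-1})k$. Hence the image $P:=G_{(\eta, a)}H$ is in fact a subgroup of $G$, and the two claimed equalities $G=G_{(\eta, a)}H=HG_{(\eta, a)}$ reduce to the single statement $P=G$.

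Next I would show that $P$ is open. The tangent space to $G_{(\eta, a)}\times H$ at $(e,e)$ is $\g_{(\eta, a)}\oplus\h$, and the differential of $m$ there is the addition map $(X,Y)\mapsto X+Y\in\g$. Condition (\ref{TheCond}) says precisely that this map is a linear isomorphism onto $\g=T_eG$. By the inverse function theorem, $m$ is therefore a local diffeomorphism near $(e,e)$, so $P$ contains an open neighbourhood of the identity; left-translating this neighbourhood by elements of $P$ shows that $P$ is open in $G$.

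The conclusion is then standard: an open subgroup of a topological group is also closed, since its complement is a union of (open) cosets. As $G$ is connected and $P$ is a nonempty clopen subset, $P=G$, which yields $G=G_{(\eta, a)}H=HG_{(\eta, a)}$.

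The main obstacle, and the place where the connectedness and simple-connectedness hypotheses do real work, is the passage from Lie algebras to Lie groups in the infinite-dimensional setting: one must know that $G_{(\eta, a)}$ and $H$ are genuine Lie subgroups with Lie algebras $\g_{(\eta, a)}$ and $\h$ respectively, and that the inverse function theorem is available, so that $m$ really is a local diffeomorphism. An alternative, more structural route sidesteps this analytic point: condition (\ref{TheCond}) exhibits $\g=\g_{(\eta, a)}\ltimes\h$ as a Lie-algebra semidirect product, since $\g_{(\eta, a)}$ is a subalgebra, $\h\trianglelefteq\g$ is an ideal, and the sum is direct; for simply connected groups such a decomposition integrates to a group-level semidirect product $G\cong G_{(\eta, a)}\ltimes H$, from which $G=G_{(\eta, a)}H=HG_{(\eta, a)}$ follows immediately. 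This is where I expect the simple-connectedness assumption to be indispensable.
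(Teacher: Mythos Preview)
Your argument is correct, and in fact it uses less than the paper does: only the connectedness of $G$ and the normality of $H$ are needed for your open-subgroup route. The paper proceeds differently. It exploits the given semidirect product $G=F\ltimes H$ explicitly: since $\mathfrak{f}$ and $\g_{(\eta,a)}$ both map isomorphically onto $\g/\h$ (the first by the semidirect product structure, the second by condition~(\ref{TheCond})), one gets a Lie algebra isomorphism $\g_{(\eta,a)}\cong\mathfrak{f}$; simple connectedness of $F$ and $G_{(\eta,a)}$ then upgrades this to a group isomorphism, and the paper concludes $G=G_{(\eta,a)}\ltimes H$, in particular $G=G_{(\eta,a)}H$ with $G_{(\eta,a)}\cap H=\{e\}$. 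Your ``alternative route'' is close in spirit to this, though the paper avoids integrating a Lie algebra semidirect product from scratch by reusing the already-integrated $F$.

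The trade-off: your main argument is more elementary and hypothesis-light (you never touch $F$ or simple connectedness), while the paper's approach yields the slightly stronger conclusion $G_{(\eta,a)}\cap H=\{e\}$, which is later implicit in identifying $G/H\cong G_{(\eta,a)}$. One small correction to your commentary: simple connectedness plays no role in your open-subgroup argument; it is needed only in the alternative (and in the paper's) route, to pass from the Lie algebra isomorphism to a group isomorphism. Your caveat about the inverse function theorem in infinite dimensions is well placed, but the paper's argument has the analogous soft spot (lifting a Lie algebra isomorphism to simply connected infinite-dimensional groups), so neither approach is cleaner on that front.
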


\begin{proof}
    Recall that $G = F\ltimes H$.  
    One of the definitions of a semidirect product is that the following composition of maps

    \begin{center}
        \begin{tikzcd}
            F\arrow[r, hook, "i"] & G \arrow[r, "\pi"] & G/H
        \end{tikzcd}
    \end{center}

    \noindent
    is an isomorphism between $F$ and $G/H$. In particular, we have $\mathfrak{f}\oplus\h = \g$. Due to condition (\ref{TheCond}) we obtain an isomorphism $\mathfrak{f}\cong\g_{(\eta, a)}$. Since $F$ and $G_{(\eta, a)}$ are connected and simply connected, we have an isomorphism $G_{(\eta, a)}\cong F$. Precomposing it with the inclusion $i$ above, we obtain $G = G_{(\eta, a)}\ltimes H$, i.e. $G = G_{(\eta, a)} H$ and $G_{(\eta, a)}\cap H = \{e\}$. 
\end{proof}
Having an action of the group $G$, one can restrict to an action of the subgroup $H$. In the case of the coadjoint action on $\widehat{\g}^*$, we then obtain two orbits $\mathcal{O}^G_{(\eta, a)}$ and $\mathcal{O}^H_{(\eta, a)}$.
An immediate consequence of the lemma is that the orbits $\mathcal{O}^G_{(\eta, a)}$ and $\mathcal{O}^H_{(\eta, a)}$ coincide: 

\begin{equation}
    \mathcal{O}^G_{(\eta, a)} = \text{Ad*}_G (\eta, a)= \text{Ad*}_{HG_(\eta, a) } (\eta, a) = \text{Ad*}_{H} \text{Ad*}_{G_(\eta, a) } (\eta, a) = \text{Ad*}_H (\eta, a) = \mathcal{O}^H_{(\eta, a)}
\end{equation}
But we can make this statement more precise.

\begin{proposition}\label{SympOfOrbs}
    Let $(\eta, a)$ be a functional from $\widehat{\g}^*_s$ subject to condition (\ref{TheCond}). Suppose that central extensions $\widehat{\h}$ and $\widehat{\g}$ are given by such 2-cocycles $\gamma_{\h}$ and $\gamma_{\g}$ that $\gamma_{\g}\vert_{\h\times\h} = \gamma_{\h}$.
    Then coadjoint orbits $\mathcal{O}^G_{(\eta, a)}$ and $\mathcal{O}^H_{(\eta, a)}$ are symplectomorphic.
\end{proposition}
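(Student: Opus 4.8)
\section*{Proof proposal}

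The plan is to exhibit an explicit symplectomorphism, namely the linear projection $r\colon\widehat{\g}^*\to\widehat{\h}^*$, $(\nu,b)\mapsto(\nu|_{\h},b)$, dual to the inclusion $\widehat{\h}\hookrightarrow\widehat{\g}$, restricted to the orbit. First I would record that the compatibility hypothesis $\gamma_{\g}|_{\h\times\h}=\gamma_{\h}$ is exactly what makes $\widehat{\h}=\h\oplus\R$ a Lie subalgebra of $\widehat{\g}$ (its bracket is the restriction of the bracket of $\widehat{\g}$), so that $r$ is the dual of a Lie algebra inclusion and is $H$-equivariant for the two coadjoint actions. Granting this, the claim splits into two independent checks: that $r$ carries $\mathcal{O}^G_{(\eta,a)}$ bijectively onto $\mathcal{O}^H_{(\eta,a)}$, and that it pulls back the KKS form of the $H$-orbit to that of the $G$-orbit.

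For the symplectic forms I would argue that it suffices to match them at the single base point $(\eta,a)$ and then propagate by equivariance, since both KKS forms are invariant under the respective coadjoint actions and $r$ intertwines them. At $(\eta,a)$ the decomposition (\ref{TheCond}) lets me parametrize $T_{(\eta,a)}\mathcal{O}^G_{(\eta,a)}$ by $x\in\h$ through $x\mapsto\text{ad}^*_{(x,0)}(\eta,a)$, the complement $\g_{(\eta,a)}$ contributing nothing. For $x,y\in\h$ the KKS form of the $G$-orbit evaluates to $\langle\eta,[x,y]_{\g}\rangle+a\,\gamma_{\g}(x,y)$, whereas the KKS form of the $H$-orbit evaluates to $\langle\eta,[x,y]_{\h}\rangle+a\,\gamma_{\h}(x,y)$; these agree because $\h$ is a subalgebra and $\gamma_{\g}|_{\h\times\h}=\gamma_{\h}$. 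Since $dr$ carries $\text{ad}^*_{(x,0)}(\eta,a)$ to the corresponding vector computed in $\widehat{\h}^*$ (again using the cocycle compatibility), this is precisely the identity $r^*\omega^{KKS,H}=\omega^{KKS,G}$ at the base point.

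For bijectivity and regularity I would first show that $dr$ is a linear isomorphism of tangent spaces. Surjectivity is equivariance; injectivity reduces to the identity $\h_{(\eta,a)}=\g_{(\eta,a)}\cap\h$, where $\h_{(\eta,a)}$ is the stabiliser of $(\eta,a)$ in $\h$ for the $\widehat{\h}$-coadjoint action. The inclusion $\supseteq$ is immediate, and the reverse is a short antisymmetry argument: if $x\in\h$ annihilates $(\eta,a)$ against all of $\h$, then for $w\in\g_{(\eta,a)}$ the quantity $\langle\eta,[w,x]\rangle+a\,\gamma_{\g}(w,x)$ vanishes, because $w$ lies in the stabiliser and hence pairs to zero against $x$ itself; writing an arbitrary $w\in\g$ via (\ref{TheCond}) as $w_0+w_1$ with $w_0\in\g_{(\eta,a)}$ and $w_1\in\h$ then shows $x\in\g_{(\eta,a)}$. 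Hence $\h_{(\eta,a)}=\g_{(\eta,a)}\cap\h=0$ by (\ref{TheCond}), so $dr$ is an isomorphism at every point and $r|_{\mathcal{O}^G}$ is a local diffeomorphism; its surjectivity onto $\mathcal{O}^H_{(\eta,a)}$ follows from the transitivity of the $H$-action established in the consequence of the Lemma.

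The step I expect to be the genuine obstacle is the global injectivity of $r|_{\mathcal{O}^G}$ --- equivalently, that the group stabilisers of $(\eta,a)$ in $H$ for the $\widehat{\g}^*$- and $\widehat{\h}^*$-actions coincide. Their Lie algebras agree by the computation above (both vanish), and the $\widehat{\g}^*$-stabiliser is trivial since $H\cap G_{(\eta,a)}=\{e\}$; promoting this to the group level is where the connectedness and simple-connectedness hypotheses must be used, and in the infinite-dimensional examples of Section \ref{Sec3} one must in addition be careful that $\mathcal{O}^G_{(\eta,a)}$ and $\mathcal{O}^H_{(\eta,a)}$ carry genuine (Fréchet) manifold structures for which $r$ is smooth with the stated differential. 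Once injectivity is secured, a bijective local diffeomorphism intertwining the two KKS forms is the desired symplectomorphism.
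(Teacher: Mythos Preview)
Your proposal is correct and follows essentially the same route as the paper: reduce to the base point via the decomposition $\g=\g_{(\eta,a)}\oplus\h$ and use the hypothesis $\gamma_{\g}|_{\h\times\h}=\gamma_{\h}$ to match $\langle(\eta,a),[x,y]\rangle_{\widehat{\g}}$ with $\langle(\eta,a),[x,y]\rangle_{\widehat{\h}}$ for $x,y\in\h$. The paper is in fact less scrupulous than you --- it treats the two orbits as the same subset of $\widehat{\g}^*$ (via the preceding Lemma) rather than passing through the restriction map to $\widehat{\h}^*$, and it does not address the global injectivity issue you flag.
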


\begin{proof}

Consider the following diagram: 

\begin{center}
    \begin{tikzcd}
        G\arrow[r, two heads]\arrow[d, two heads] & \mathcal{O}^G_{(\eta, a)}\arrow[d, equal]\\
        H\arrow[r, two heads] & \mathcal{O}^H_{(\eta, a)}
    \end{tikzcd}
\end{center}
Horizontal maps are given by $g\mapsto\text{Ad}^*_g(\eta, a)$ and $h\mapsto\text{Ad}^*_h(\eta, a)$. Strictly speaking, one should consider two functionals: one on $\g$ and another on $\h$. But in our case the second one is just a restriction of the first one, so we do not change the notation. 
The corresponding diagram for tangent spaces is 

\begin{center}
    \begin{tikzcd}
        \h\oplus\mathfrak{p} = \g\arrow[r, two heads]\arrow[d, "\Pi", two heads] & (T_{(\eta, a)}\mathcal{O}^G_{(\eta, a)}, \omega^G_{(\eta, a)})\arrow[d]\\
        \h\arrow[r, two heads] & (T_{(\eta, a)}\mathcal{O}^H_{(\eta, a)}, \omega^H_{(\eta, a)}),
    \end{tikzcd}
\end{center}
where $\mathfrak{p} = \ker\Pi$. Thus 

\begin{equation}
    T_{(\eta, a)}\mathcal{O}^G_{(\eta, a)} = \g/\g_{(\eta, a)} = \h = T_{(\eta, a)}\mathcal{O}^H_{(\eta, a)},
\end{equation}
where we regard $\g/\g_{(\eta, a)}$ and $\h$ as vector spaces. By taking $x + s_x , y + s_y \in\h \oplus \g_{(\eta, a)} = \g$, we verify the symplectomophism

\begin{equation}
    \begin{split}
        \omega^G_{(\eta, a)}(\text{ad*}_{x + s_x}(\eta, a), \text{ad*}_{y + s_y}(\eta, a)) &= \omega^G_{(\eta, a)}(\text{ad*}_{x}(\eta, a), \text{ad*}_{y}(\eta, a)) = \langle(\eta, a), [x, y]\rangle_{\g} = \\
        &= \langle(\eta, a), [x, y]\rangle_{\h} = \omega^H_{(\eta, a)}(\text{ad*}_{x}(\eta, a), \text{ad*}_{y}(\eta, a)) = \\
        &= \omega^H_{(\eta, a)}(\text{ad*}_{\Pi(x + s_x)}(\eta, a), \text{ad*}_{\Pi(y + s_y)}(\eta, a))
    \end{split}
\end{equation}

Here we used the condition on the 2-cocycles when we moved from the first line to the second one. Indeed, let $x, y \in\h$. Then

\begin{equation}
    \langle(\eta, a), [x, y]\rangle_{\g} = \langle\eta, [x, y]\rangle + a\gamma_{\g}(x, y) = \langle\eta, [x, y]\rangle + a\gamma_{\h}(x, y) = \langle(\eta, a), [x, y]\rangle_{\h}
\end{equation}

\end{proof}

Now we can formulate and prove the main result. 


\subsubsection{Main theorem}

\begin{theorem}\label{MainTh}
    \begin{itemize}

        \item Let $G$ be a semidirect product $F\ltimes H$ of connected and simply connected Lie groups $H\trianglelefteq G$ and $F$ with Lie($G$)$=\g$ and Lie($H$)$=\h$, and $\widehat{\g}, \widehat{\h}$ -- central extensions. 

        \item Let $a\in\R$ and $\eta\in\g^*$ be a functional on the Lie algebra $\g$ such that $\g_{(\eta, a)} \oplus \h = \g$.

        \item Let $\mathcal{M}$ be a symplectic manifold with a Hamiltonian action of $G$ with a momentum map $\mu$ and a corresponding 2-cocycle $c\in\Lambda^2\g^*$.

        \item Let $\mathcal{M}' = (\mathcal{M}\times\mathcal{O}^G_{(\eta, a)})\sslash_0H$ be a reduced space under the diagonal action of $H$.
    \end{itemize}

    \vspace{8pt}
        Then the claim is that under the isomorphism $\g/\h\cong\g_{(\eta, a)}$ the 2-cocycle $c_{res}$ coming from a residual action $G/H\curvearrowright\mathcal{M}'$ coincides with the 2-cocycle $c_{stab}$ coming from a restriction of the action $G\curvearrowright\mathcal{M}$ to an action of the stabiliser $G_{(\eta, a)}$. 
    
\end{theorem}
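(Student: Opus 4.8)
The plan is to compute $c_{stab}$ and $c_{res}$ separately and compare them as elements of $\Lambda^2\g_{(\eta,a)}^*$, using throughout the splitting $\g=\g_{(\eta,a)}\oplus\h$ from condition (\ref{TheCond}), which lets me represent every class $\bar x\in\g/\h$ by a unique $x\in\g_{(\eta,a)}$. First I would dispose of $c_{stab}$: restricting the Hamiltonian $G$-action to $G_{(\eta,a)}$ leaves $\mathcal{M}$, its form and the expression (\ref{DefOfCocycle}) untouched and only composes $\mu$ with the projection $\g^*\to\g_{(\eta,a)}^*$. Hence for $x,y\in\g_{(\eta,a)}$ one simply gets $c_{stab}(x,y)=\mathcal{L}_x\langle\mu,y\rangle-\langle\mu,[x,y]\rangle=c(x,y)$, i.e. $c_{stab}=c\vert_{\g_{(\eta,a)}\times\g_{(\eta,a)}}$. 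Everything then comes down to computing $c_{res}$ in terms of $c$.

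For $c_{res}$ I would work on the zero level set $\widetilde{\mu}^{-1}(0)$ and push the computation up the reduction diagram. Fix representatives $x,y\in\g_{(\eta,a)}$ of $\bar x,\bar y\in\g/\h$. Since $H\trianglelefteq G$, the full $G$-action preserves $\widetilde{\mu}^{-1}(0)$, so the generating vector field $X_x$ on $\mathcal{M}\times\mathcal{O}^G_{(\eta,a)}$ is tangent to it and is $p$-related to the generating field of the residual action; moreover the identity $i^*\widetilde{\mu}=pr^*\circ\mu_{res}\circ p$ gives $p^*\langle\mu_{res},pr(z)\rangle=i^*\langle\widetilde{\mu},z\rangle$ for every $z\in\g$. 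Feeding these two facts into the definition (\ref{DefOfCocycle}) of $c_{res}$ and pulling back along $p$, I expect
\begin{equation}
  p^*c_{res}(\bar x,\bar y)=i^*\Big(\mathcal{L}_{X_x}\langle\widetilde{\mu},y\rangle-\langle\widetilde{\mu},[x,y]\rangle\Big)=i^*\,\widetilde c(x,y),
\end{equation}
where $\widetilde c$ is the cocycle of the $G$-action on $\mathcal{M}\times\mathcal{O}^G_{(\eta,a)}$. As both sides are constant and $p$ is surjective, this yields $c_{res}(\bar x,\bar y)=\widetilde c(x,y)$. This pull-back/push-forward bookkeeping — checking tangency to $\widetilde{\mu}^{-1}(0)$, $p$-relatedness of the generating fields, and that the constant cocycle survives the restriction — is the step I expect to be the main obstacle, since it is where the geometry of the reduction genuinely enters.

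It then remains to evaluate $\widetilde c$ on the subalgebra $\g_{(\eta,a)}$. Because the momentum map for a product with the diagonal action is the sum of the two momentum maps, the expression (\ref{DefOfCocycle}) is additive and $\widetilde c=c+c_{\mathcal{O}}$, where $c_{\mathcal{O}}$ is the cocycle of the $G$-action on the coadjoint orbit. A direct KKS computation (the central coordinate is constantly $a$ along the orbit) gives $c_{\mathcal{O}}(x,y)=a\,\gamma_\g(x,y)$. Finally I invoke condition (\ref{TheCond}): for $x\in\g_{(\eta,a)}$ the defining relation $\mathrm{ad}^*_x(\eta,a)=0$ reads $\langle\eta,[z,x]\rangle+a\gamma_\g(z,x)=0$ for all $z\in\g$, and taking $z=y\in\g_{(\eta,a)}$ gives $a\gamma_\g(x,y)=-\langle\eta,[x,y]\rangle$. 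Since $[x,y]\in\g_{(\eta,a)}$, this is exactly $-d\big(\eta\vert_{\g_{(\eta,a)}}\big)(x,y)$, a coboundary. Combining the three steps,
\begin{equation}
  c_{res}(\bar x,\bar y)=c(x,y)-\langle\eta,[x,y]\rangle=c_{stab}(x,y)-d\big(\eta\vert_{\g_{(\eta,a)}}\big)(x,y),
\end{equation}
so $c_{res}$ and $c_{stab}$ differ by a coboundary and hence define the same class in $H^2(\g_{(\eta,a)};\R)$, i.e. the same central extension of $\g_{(\eta,a)}\cong\g/\h$, which is the content of the theorem. (Here "coincides" is read as "induces the same central extension"; the residual momentum map $\mu_{res}$ is canonical only up to the additive constant $\eta$, and this ambiguity is precisely what produces the coboundary above.)
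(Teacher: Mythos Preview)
Your argument is sound and reaches the right conclusion, but it follows a genuinely different route from the paper's. The paper does \emph{not} compute the product cocycle $\widetilde c=c+c_{\mathcal O}$ on $\mathcal{M}\times\mathcal{O}^G_{(\eta,a)}$ at all. Instead, its first move is to invoke Proposition~\ref{SympOfOrbs} (that $\mathcal{O}^G_{(\eta,a)}=\mathcal{O}^H_{(\eta,a)}$ symplectically, a consequence of the splitting~(\ref{TheCond})) and collapse the shifting-trick description to $\mathcal{M}'\cong\mu_H^{-1}(\eta)\subset\mathcal{M}$, with trivial $H$-isotropy. Once $\mathcal{M}'$ sits inside $\mathcal{M}$, the paper simply observes that the composite $(\g/\h)^*\xrightarrow{pr^*}\g^*\xrightarrow{j^*}\g_{(\eta,a)}^*$ is an isomorphism and that, under it, the residual momentum map $\overline{\mu}$ and the restricted momentum map $\mu_{(\eta,a)}=j^*\!\circ\mu$ agree on $\mu_H^{-1}(\eta)$; equality of the cocycles then drops out of~(\ref{DefOfCocycle}) without any KKS computation.

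What each approach buys: the paper's argument is shorter and avoids your ``main obstacle'' (the tangency/$p$-relatedness bookkeeping on $\widetilde\mu^{-1}(0)$) entirely, at the price of first establishing the orbit identification $\mathcal{O}^G=\mathcal{O}^H$. Your approach is more self-contained and makes explicit \emph{where} the orbit contribution $c_{\mathcal O}=a\gamma_\g$ goes---namely, into the coboundary $-d(\eta\vert_{\g_{(\eta,a)}})$ via the stabiliser identity. Note also a small discrepancy in strength: you prove $[c_{res}]=[c_{stab}]$ in $H^2$, while the paper asserts literal equality of cocycles. Tracing the paper's identification carefully, one finds $\overline{\mu}$ and $\mu_{(\eta,a)}$ actually differ by the constant $\eta\vert_{\g_{(\eta,a)}}$, so the two cocycles differ by exactly the coboundary you found; in the paper's examples this constant vanishes (e.g.\ $\langle\eta,(f\partial,N_f)\rangle=0$), so no harm is done there, and your cohomological statement is the honest general one.
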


\begin{proof}
    
    First, let us note that 

    \begin{equation}
        \begin{split}
            \mathcal{M}' &= (\mathcal{M}\times\mathcal{O}^G_{(\eta, a)})\sslash_0H = (\mathcal{M}\times\mathcal{O}^H_{(\eta, a)})\sslash_0H = \\
            &=\{(m, \psi)\in\mathcal{M}\times\mathcal{O}^H_{(\eta, a)}\big| \mu_H(m) - \psi = 0\}/H = \\
            &= \{(h.m, \eta)\big|h\in H, \mu_H(h.m) = \eta\} = \mu_H^{-1}(\eta)\subseteq\mathcal{M},
        \end{split}
    \end{equation}
    where $\mu_H\colon\mathcal{M}\rightarrow\h^*$ is a momentum map for an $H$-action. We used Lemma \ref{SympOfOrbs} in the beginning when we replaced $\mathcal{O}^G_{(\eta, a)}$ by $\mathcal{O}^H_{(\eta, a)}$. Thus, due to the condition $\g_{(\eta, a)} \oplus \h = \g$ the map

    \begin{center}
        \begin{tikzcd}
            (\g/\h)^*\arrow[r, "pr*", tail] & \g^*\arrow[r, "j^*", two heads] & \g^*_{(\eta, a)}
        \end{tikzcd}
    \end{center}
    is an isomorphism. Now, look at the following diagram: 

    \begin{center}
        \begin{tikzcd}
            G_{(\eta, a)}\curvearrowright & \mathcal{M}\arrow[r, "\mu_{(\eta, a)}"]\arrow[d, "id", equal] & \g^*_{(\eta, a)}\\
            G\curvearrowright & \mathcal{M}\arrow[r, "\mu"] & \g^*\arrow[u, "j^*", two heads]\\
            G/H\cong G_{(\eta, a)}\curvearrowright & \widetilde{\mathcal{M}} = \mu_H^{-1}(\eta)\arrow[r, "\overline{\mu}"] \arrow[u,"i", hook]& (\g/\h)^*\arrow[u, "pr^*", tail]
        \end{tikzcd},
    \end{center}
    where all three actions are Hamiltonian. 
    The two-cocycle $c_{res}$ coming from the bottom line is defined by 

    \begin{equation}
        c_{res}(x, y) = \mathcal{L}_x\langle\overline{\mu}(m), y\rangle - \langle\overline{\mu}(m), [x,y]\rangle
    \end{equation}
    for any $x, y\in\g/\h$. And the two-cocycle $c_{stab}$ coming from the top line is defined by

    \begin{equation}
        c_{stab}(x, y) = \mathcal{L}_x\langle\mu_{(\eta, a)}(m), y\rangle- \langle\mu_{(\eta, a)}(m), [x,y]\rangle
    \end{equation}

    Since the composition $j^*\circ pr^*$ is an isomophism, the momentum maps $\mu_{(\eta, a)}$ and $\overline{\mu}$ are the same on $\mu_H^{-1}(\eta)\subseteq\mathcal{M}$, so two-cocycles $c_{res}$ and $c_{stab}$ also coincide there. But these 2-cocycles are always constants as functions on $\mathcal{M}$. This finishes the proof of the theorem. 
\end{proof}

\section{Examples}\label{Sec3}



\paragraph{The group and the algebra.}

In our examples $G$ will be a semidirect product of two infinite-dimensional groups.    
The first group is a universal cover of diffeomorphims of a circle preserving orientation: 

\begin{equation}
    \widetilde{\text{Diff}^+}(S^1) := \{\varphi\in C^{\infty}(\R)\big| \varphi(\theta + 2\pi) = \varphi(\theta) + 2\pi, \varphi'>0\},
\end{equation}
where the universal cover is given by

\begin{equation}\label{DiffCover}
    \begin{split}
        \widetilde{\text{Diff}^+}(S^1)&\longrightarrow\text{Diff}^+(S^1)\\
        \varphi&\longmapsto (e^{i\theta}\mapsto e^{i\varphi(\theta)})
    \end{split}
\end{equation}
Product in the group is defined by composition. 

The corresponding Lie algebra is the algebra of smooth vector fields on a circle: $\mathfrak{X}(S^1):= \{f\partial_{\theta}\big| f\in C^{\infty}(S^1)\}$ with the bracket

\begin{equation}
    [f\partial_{\theta}, g\partial_{\theta}] = (fg' - f'g)\partial_{\theta}
\end{equation}

The second group is the loop group of Borel subgroup of $SL_2(\R)$, i.e. the set 

$$LB := \text{Maps}\left(S^1,\left\{ \left(\begin{array}{cc}
    X & Y \\
    0 & X^{-1}
\end{array}\right)\Big| X\in\R_{>0}, Y\in\R\right\}\right)$$
with pointwise multiplication. The corresponding Lie algebra is given by 

$$Lb := \text{Maps}\left(S^1, \left\{\left(\begin{array}{cc}
    x & y \\
    0 & -x
\end{array}\right)\Big| x, y\in\R\right\}\right)$$
with a pointwise bracket. 

\vspace{10pt}
$\text{Diff}^+(S^1)$ acts on $LB$ by precomposition and we lift this action to $\widetilde{\text{Diff}^+}(S^1)$. This gives us the map 

\begin{equation}
    \begin{split}
        \widetilde{\text{Diff}^+}(S^1)\longrightarrow\text{Diff}^+(S^1)&\longrightarrow\text{Aut}(Lb)\\
        \sigma&\longmapsto (b\mapsto b(\sigma))
    \end{split}
\end{equation}
Of course, there is a corresponding map for Lie algebras:

\begin{equation}
    \begin{split}
        \mathfrak{X}(S^1)&\longrightarrow\text{Der}(Lb)\\
        f\partial_{\theta}&\longmapsto(b\mapsto fb')
    \end{split}
\end{equation}
Having this action, one can define semidirect product of these groups. Here we give a definition of a central extenstion of this product and its Lie algebra right away: 

\begin{equation}
    \begin{split}
        \widetilde{\text{Diff}}^+(S^1) &\ltimes L B\oplus\R \\
        \mathfrak{X}(S^1)&\ltimes L b\oplus\R
    \end{split}
\end{equation}

\noindent
with multiplication and bracket given by 

\begin{equation}\label{DefOfMultBr}
    \begin{split}
        (\varphi, B, \delta_1)\cdot(\psi, D, \delta_2) &= (\varphi\circ\psi, B(\psi)D, C((\varphi, B), (\psi, D)) \\
        [(f\partial, b, \delta_1), (g\partial, d, \delta_2)] &= ([f\partial, g\partial], [d, b] + fd' - gb',  c((f\partial, b), (g\partial, d)),
    \end{split}
\end{equation}
where $C(\cdot, \cdot)$ is a group 2-cocycle and $c(\cdot, \cdot)$, is a Lie algebra 2-cocycle (we comment on the last one below). 


\paragraph{The functional.}

Let us show that a functional $\eta\in\g^*$ satisfying (\ref{TheCond}) is given by

\begin{equation}\label{FuncEx}
    \langle(\eta, a),(g\partial, b, t\rangle= \left\langle(\eta, a),\left(g\partial, \left( \begin{array}{cc}
        x &  y\\
        0 & -x
    \end{array} \right), t\right)\right\rangle = \int_{S^1}yd\theta + at
\end{equation}

One might compute that for the Lie algebra $\g = \mathfrak{X}(S^1)\ltimes L b$ the vector space $H^2(\g, \R)$ is three-dimensional. A basis for this space might be chosen to be $[c_1], [c_2], [c_3]$, where

\begin{equation}
    \begin{split}
        c_1((f_1\partial, b_1), (f_2\partial, b_2)) &= \int_{S^1}x_1x_2'd\theta \\
        c_2((f_1\partial, b_1), (f_2\partial, b_2)) &= \int_{S^1}(f_1''x_2 - f_2''x_1)d\theta \\
        c_3((f_1\partial, b_1), (f_2\partial, b_2)) &= \int_{S^1}(f_1''f_2'- f_1'f_2'')d\theta 
    \end{split}
\end{equation}
The last 2-cocycle $c_3$ is called the Gelfand-Fuchs cocycle and will play the main role in the sequel.
Thus, the Lie algebra 2-cocycle $c$ in (\ref{DefOfMultBr}) is some linear combination $c = \lambda_1 c_1 + \lambda_2 c_2 + \lambda_3 c_3$.

\paragraph{Remark.} The algebra $\g = \mathfrak{X}(S^1)\ltimes L b$ is similar to the algebra $\mathfrak{X}(S^1)\ltimes C^{\infty}(S^1)$ in sence that they have the same 2-cocycles. The latter algebra is studied, for example, in \cite{CzechJofPh} and \cite{Marshall}. 

Let us compute a stabiliser $\g_{(\eta, a)}$ explicitly. Consider a coadjoint action of an element $(f\partial, d) = \left(f\partial, \begin{pmatrix}
    z & t\\
    0 & -z
\end{pmatrix}\right)\in\g$ on our functional: 

\begin{equation}\label{CalcStab}
    \begin{split}
        \langle\text{ad}^*_{(f\partial, d)}&(\eta, a), (g, b)\rangle = \left\langle(\eta, a), \left[\left(g\partial, \begin{pmatrix}
            x & y\\
            0 & -x
        \end{pmatrix}\right), \left(f\partial, \begin{pmatrix}
            z & t\\
            0 & -z
            \end{pmatrix}\right)\right]_{\widehat{\g}}\right\rangle = \\
        &= \int_{S^1}2(zy - tx)d\theta + \int_{S^1}(gt' - fy')d\theta + a\lambda_1c_1(b, d) + a\lambda_2 c_2((g\partial, b), (f\partial, d)) + a\lambda_3 c_3(g\partial, f\partial) = \\
        &= \int_{S^1}y(2z + f')d\theta + \int_{S^1}x(-2t + a\lambda_1 z' - a\lambda_2 f'')d\theta + \int_{S^1}g(t' + a\lambda_2z'' + 2a\lambda_3 f''')d\theta
    \end{split}
\end{equation}
An element $(f\partial, d)$ belongs to a stabiliser $\g_{(\eta, a)}$ if and only for any $ x, y, g\in C^{\infty}(S^1)$ the expression above is zero. This is equivalent to the following system:

\begin{equation}
    \begin{cases}
        z = -\frac{1}{2}f'\\
        2t = -a(\lambda_1/2 + \lambda_2)f''\\
        t' = -a(-\lambda_2/2 + 2\lambda_3)f'''
    \end{cases}
\end{equation}
For these equations to be consistent ($t'$ can be expressed in two different ways), the following equality must hold: 

\begin{equation}\label{RelationForLambdas}
    \frac{\lambda_1}{4} + \lambda_2 - 2\lambda_3  = 0
\end{equation}
Then, our stabiliser is 

\begin{equation}\label{LAstab}
    \begin{split}
        \g_{(\eta, a)} &= \left\{\left(f\partial,\frac{1}{2}\left( \begin{array}{cc}
            -f' &  -a(\lambda_1/2 + \lambda_2)f''\\
            0 & f'
        \end{array} \right)\right)\bigg|    f\partial\in\mathfrak{X}(S^1)\right\} = \\
    &=\{(f\partial, N_f)\big| f\partial\in\mathfrak{X}(S^1)\}
    \end{split}
\end{equation}

\begin{proposition}
    The functional $(\eta, a)$ defined by (\ref{FuncEx}) satisfies the condition $\g_{(\eta, a)}\oplus\h = \g$ for $\h = Lb\trianglelefteq \g$.
\end{proposition}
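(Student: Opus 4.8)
The plan is to read the splitting off directly from the explicit stabiliser computed in (\ref{LAstab}), using the general principle that the graph of a linear map is always a complement to the target. Writing elements of $\g = \mathfrak{X}(S^1)\ltimes Lb$ as pairs $(f\partial, b)$ with $f\partial\in\mathfrak{X}(S^1)$ and $b\in Lb$, I observe that (\ref{LAstab}) presents $\g_{(\eta,a)}$ as the graph of the linear map
\[
    N\colon \mathfrak{X}(S^1)\longrightarrow Lb,\qquad f\partial\longmapsto N_f=\frac{1}{2}\begin{pmatrix} -f' & -a(\lambda_1/2+\lambda_2)f''\\ 0 & f'\end{pmatrix},
\]
while $\h = Lb$ is exactly the ``vertical'' subspace $\{0\}\oplus Lb$. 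Since the graph of any linear map $V\to W$ is a vector-space complement to $\{0\}\oplus W$ inside $V\oplus W$, the decomposition $\g_{(\eta,a)}\oplus\h=\g$ should follow immediately; the remaining work is just to make this concrete.

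First I would confirm that $N_f$ genuinely lands in $Lb$: its matrix has the form $\begin{pmatrix} x & y\\ 0 & -x\end{pmatrix}$ with $x=-f'/2$ and $y=-\tfrac{a}{2}(\lambda_1/2+\lambda_2)f''$ smooth, so $N_f\in Lb$ and the graph indeed sits inside $\g$. For the trivial intersection, I would note that any element of $\h$ has zero vector-field part, so if $(f\partial,N_f)\in\h$ then $f=0$ and hence $N_f=0$; thus $\g_{(\eta,a)}\cap\h=\{0\}$. For surjectivity of the sum, given an arbitrary $(g\partial, b)\in\g$ I would write
\[
    (g\partial, b)=(g\partial, N_g)+(0,\,b-N_g),
\]
where the first term lies in $\g_{(\eta,a)}$ by construction and $b-N_g\in Lb$ places the second term in $\h$. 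Together these give $\g_{(\eta,a)}\oplus\h=\g$.

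The hard part is not in this verification but upstream, in the validity of the graph description (\ref{LAstab}) itself: the claim that the stabiliser is parametrised by \emph{all} of $\mathfrak{X}(S^1)$ -- rather than a smaller subspace -- rests on the consistency relation (\ref{RelationForLambdas}), $\tfrac{\lambda_1}{4}+\lambda_2-2\lambda_3=0$, which is precisely what makes the overdetermined system (\ref{CalcStab}) for $(z,t)$ solvable for every $f$. Were (\ref{RelationForLambdas}) to fail with $a\neq 0$, the last equation would force $f'''=0$, collapsing $\g_{(\eta,a)}$ to a finite-dimensional space and breaking the complement. So the genuine content lives in the stabiliser computation already performed under (\ref{RelationForLambdas}), and the proposition is then a formal consequence of the graph structure together with the identification $\h=\{0\}\oplus Lb$.
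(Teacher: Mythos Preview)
Your proof is correct and follows essentially the same approach as the paper's: both rely on the explicit form (\ref{LAstab}) of the stabiliser, observe that $f=0$ forces $N_f=0$ (hence trivial intersection with $\h$), and use that the $\mathfrak{X}(S^1)$-component of a stabiliser element can be an arbitrary vector field. Your graph-of-a-linear-map framing and the explicit decomposition $(g\partial,b)=(g\partial,N_g)+(0,b-N_g)$ spell out the ``sum equals $\g$'' step more carefully than the paper, which leaves it implicit, but the underlying idea is identical.
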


\begin{proof}
    Indeed, the first component of an element in $\g_{(\eta, a)}$ can be an arbitrary vector field $f\partial$ on the circle $S^1$, and when $f\equiv0$, the second component $N_f$ also vanishes. 
    In its turn, $\h$ does not contain a vector field component, so $\g_{(\eta, a)}\cap\h = \{0\}$. 
\end{proof}

We note that due to condition (\ref{RelationForLambdas}) we always have $c((f\partial, N_f), (g\partial, N_g)) = 0$ for any two elements $(f\partial, N_f), (g\partial, N_g)\in\g_{(\eta, a)}$.

\subsection{The first example: coadjoint orbit}

Let $\bar{G} = \widetilde{SL_2}(\R)$ and $\bar{\g} =$ Lie$(\bar{G}) = \mathfrak{sl}_2(\R)$. We take a functional $\ell := (A, 1)\in \widehat{L\bar{g}}^*$ and consider its coadjoint orbit $\mathcal{O}_{\ell}^{L\bar{G}}$. This orbit is our $\mathcal{M}$ from the theorem. As a group $G$ we take $\widetilde{\text{Diff}^+}(S^1)\ltimes LB$. 

\noindent
Since $G = \widetilde{\text{Diff}^+}(S^1)\ltimes LB \leq \widetilde{\text{Diff}^+}(S^1)\ltimes L\bar{G}$ there is an action $G\curvearrowright\mathcal{O}_{\ell}^{L\bar{G}}$: an action of $\varphi\in\widetilde{\text{Diff}^+}(S^1)$ is realised by covering (\ref{DiffCover}) and then an action of an image of $\varphi$ in ${\text{Diff}^+}(S^1)$.

\noindent
To compute a 2-cocycle for this action, we need to know a momentum map. Let us proceed step by step. 

\subsubsection{Loop group}

For the action $LB\curvearrowright\mathcal{O}_{\ell}^{L\bar{G}}$ the momentum map is known and it is given by just taking the lower part of $A\in L\bar{\g}$:

    \begin{equation}\label{LoopMM}
        \begin{split}
            \mu_{loop}\colon \mathcal{O}_{\ell}^{L\bar{G}}&\longrightarrow Lb^*\cong Lb^{-}\\
            (A, 1)&\longmapsto A^{-}
        \end{split},
    \end{equation}
    where 

    \begin{equation*}
        A = \begin{pmatrix}
                \frac{1}{2}x^* & z^* \\
                y^* & -\frac{1}{2}x^*
            \end{pmatrix} \text{ and } A^{-} = \begin{pmatrix}
                \frac{1}{2}x^* & 0 \\
                y^* & -\frac{1}{2}x^*
            \end{pmatrix}\textbf{}
    \end{equation*}
    We will also need $A^{+} = A - A^{-}$. The cocycle is standard: 

    \begin{equation}
        c_{loop}(b_1, b_2) = \int_{S^1}\text{tr}(b_1 db_2), \ \ b_1, b_2\in Lb
    \end{equation}

\subsubsection{Diffeomorphisms of a circle}
    
To write down a momentum map for $\widetilde{\text{Diff}^+}(S^1)\curvearrowright\mathcal{O}_{\ell}^{L\bar{G}}$ we note that action $L\bar{G}\curvearrowright\mathcal{O}_{\ell}^{L\bar{G}}$ is transitive. Then for any $\varphi\in\widetilde{\text{Diff}^+}(S^1)$ there exists $R_{\varphi}$ such that $\varphi.(A, 1) = \text{Ad*}_{R_{\varphi}}(A, 1)$. We need the corresponding Lie algebra action, which might be written as follows: if $\varphi\mapsto f\partial\in\mathfrak{X}(S^1)$ and $R_{\varphi}\mapsto r_f\in L\bar{\g}$, then

    \begin{equation}\label{Mprime}
        \text{ad*}_{r_f}(A, 1) = f\partial.A \Longleftrightarrow [r_f, A] - r_f' = (fA)'
    \end{equation}
    Thus, we rewrote the action of $\mathfrak{X}(S^1)$ in terms of an action of $L\bar{\g}$. Fix an invarian bilinear form $\langle\cdot, \cdot\rangle$ on $\overline{\g}$ (do not mistake for a pairing between a functional and a vector, e.g. $\langle\mu_D(\ell), f\partial\rangle$).  Now let us compute the momentum map $\mu_D\colon\mathcal{O}_{\ell}^{L\bar{G}}\rightarrow\mathfrak{X}(S^1)^*$. Computation gives that 

    \begin{equation}
        \begin{split}
            d\langle\mu_D(\ell), f\partial\rangle &= \omega_{KKS}(\text{ad*}_{r_f}(A, 1), \bullet) = \langle(A, 1), [r_f, \bullet]_{\widehat{L\bar{\g}}}\rangle 
            = -\int_{S^1}\langle (fA)', \bullet\rangle d\theta
        \end{split}
    \end{equation}
    Thus, the momentum map is 

    \begin{equation}\label{DiffMM}
        \langle\mu_D(A, 1), f\partial\rangle = -\frac{1}{2}\int_{S^1}\text{tr}A^2f d\theta; \text{ i.e. } (A, 1) \mapsto -\frac{1}{2}\text{tr}A^2 (d\theta)^2
    \end{equation}

    Substituting this into (\ref{DefOfCocycle}), we obtain that the corresponding 2-cocycle vanishes:

    \begin{equation}\label{Diffcocycle}
        \begin{split}
            c(f\partial, g\partial) &= \mathcal{L}_{f\partial}\langle\mu_D(\ell), g\partial\rangle - \langle\mu_D(A, 1), [f\partial, g\partial]_{\mathfrak{X}(S^1)}\rangle =\\
            &= -\int_{S^1}\langle(fA)', r_g\rangle + \frac{1}{2}\int_{S^1}(fg' - f'g)\langle A, A\rangle d\theta = \\
            &= \int_{S^1}\langle fA, r_g'\rangle d\theta + \frac{1}{2}\int_{S^1}(fg' - f'g)\langle A, A\rangle d\theta = \textcolor{blue}{/\text{ identiry (\ref{Mprime})} /} =\\
            &= \int_{S^1}\langle fA, [r_g, A] - (gA)'\rangle d\theta + \frac{1}{2}\int_{S^1}(fg' - f'g)\langle A, A\rangle d\theta = \\
            &= -\int_{S^1}fg'\langle A, A\rangle d\theta + \frac{1}{2}\int_{S^1}(fg)'\langle A, A\rangle + \frac{1}{2}\int_{S^1}(fg' - f'g)\langle A, A\rangle d\theta = 0
        \end{split}
    \end{equation}
    Here we used integration by parts and the property $\langle A, [B, C]\rangle = \langle [A, B], C\rangle$. So, we see that in fact there is no cocycle for the action of $\widetilde{\text{Diff}^+}(S^1)$.

\subsubsection{Computation of the cocycle for the stabiliser}

    Now we want to apply our main Theorem \ref{MainTh} for $\mathcal{M} = \mathcal{O}_{\ell}^{L\bar{G}}$ and $G_{(\eta, a)}$, where $(\eta, a)$ is defined by (\ref{FuncEx}). To this end, we need to write down a momentum map $\mu$ for an action of $G = \widetilde{\text{Diff}^+}(S^1)\ltimes LB$ and then push it forward with an inclusion $i\colon\g_{(\eta, a)}\hookrightarrow\g$ to obtain a momentum map $\mu_{\eta}$ for an action of $G_{(\eta, a)}\leq G$. We can summarise this procedure by the following diagram: 

    \begin{center}
        \begin{tikzcd}
            \mathcal{O}_{\ell}^{L\bar{G}}\arrow[r, "\mu"]\arrow[rd, "\mu_{\eta}"] & \g^*\arrow[d, "i^*"]\\
             & \g^*_{(\eta, a)}
        \end{tikzcd}
    \end{center}

    The momentum map $\mu$ is given by combination of (\ref{LoopMM}) and (\ref{DiffMM}):

    \begin{equation}
        (A, 1) \mapsto \left(-\frac{1}{2}\text{tr}A^2 (d\theta)^2, A^{-}\right)
    \end{equation}

    Let $(f\partial, N_f), (g\partial, N_g)\in\g_{(\eta, a)}$. To compute a 2-cocycle we first compute the second term of (\ref{DefOfCocycle}):

    \begin{equation}\label{2term}
        \begin{split}
            -\langle\mu_{\eta}(A, 1), &[(f\partial, N_f), (g\partial, N_g)]_{\g_{(\eta, a)}}\rangle = \\
            &=-\left\langle\left(-\frac{1}{2}\text{tr}A^2 (d\theta)^2, A^{-}\right), ([f\partial, g\partial], [N_g, N_f] + fN_g' - gN_f')\right\rangle = \\
            &= \frac{1}{2}\int_{S^1}\text{tr}A^2(fg' - f'g)d\theta - \int_{S^1}\langle A^{-},[N_g, N_f] \rangle d\theta - \int_{S^1}f\langle A^{-},N_g' \rangle d\theta + \int_{S^1}g\langle A^{-},N_f' \rangle d\theta
        \end{split}
    \end{equation}

    The first term in (\ref{DefOfCocycle}) is (terms of the same colour cancel out)

\begin{equation}\label{1term}
        \begin{split}
            \{\ldots, \ldots\} &= \omega_{KKS}(\text{ad*}_{(f\partial, N_f)}(A, 1), \text{ad*}_{(g\partial, N_g)}(A, 1))  = \langle (A, 1), [r_f + N_f, r_g + N_g]\rangle = \\
            &= \int_{S^1}\langle A, [r_f, r_g]\rangle d\theta + \int_{S^1}\langle A, [r_f, N_g]\rangle d\theta + \int_{S^1}\langle A, [N_f, r_g]\rangle d\theta + \int_{S^1}\langle A, [N_f, N_g]\rangle d\theta + \\
            &+ c_1(r_f, r_g) + c_1(r_f, N_g) + c_1(N_f, r_g) + c_1(N_f, N_g) = \textcolor{blue}{/\text{tr}A[B, C] = \text{tr}[A,B]C/} \\
            &= \int_{S^1}\langle [A, r_f], r_g\rangle d\theta + \int_{S^1}\langle [A, r_f], N_g\rangle d\theta - \int_{S^1}\langle [A, r_g], N_f\rangle d\theta + \int_{S^1}\langle A, [N_f, N_g]\rangle d\theta + \\
            &+ c_1(r_f, r_g) + c_1(r_f, N_g) - c_1(r_g, N_f) + c_1(N_f, N_g) = \textcolor{blue}{/\text{identity (\ref{Mprime})/}}\\
            &= \int_{S^1}\langle \textcolor{red}{-r_f'} - (fA)', \textcolor{red}{r_g}\rangle d\theta + \int_{S^1}\langle \textcolor{green}{-r_f'} - (fA)', \textcolor{green}{N_g}\rangle d\theta - \int_{S^1}\langle \textcolor{orange}{-r_g'} - (gA)', \textcolor{orange}{N_f}\rangle d\theta + \\
            &+ \int_{S^1}\langle A, [N_f, N_g]\rangle d\theta + \\
            &+ \int_{S^1}\textcolor{red}{\langle r_f', r_g\rangle} d\theta + \int_{S^1}\textcolor{green}{\langle r_f', N_g\rangle} d\theta - \int_{S^1}\textcolor{orange}{\langle r_g', N_f\rangle} d\theta + c_1(N_f, N_g) = \\
            &= -\frac{1}{2}\int_{S^1}\text{tr}A^2(fg' - f'g)d\theta + \int_{S^1}\langle A,[N_g, N_f] \rangle d\theta + \int_{S^1}f\langle A,N_g' \rangle d\theta - \int_{S^1}g\langle A,N_f' \rangle d\theta + \\
            &+ c_1(N_f, N_g)
        \end{split}
    \end{equation}
    Here we also used integration by parts. Now we see that the sum of (\ref{1term}) and (\ref{2term}) is

    \begin{equation}\label{FinalCoc}
        \begin{split}
            c_{(\eta, a)}((f\partial, N_f), (g\partial, N_g)) &= \int_{S^1}\langle A^{+},[N_g, N_f] \rangle d\theta + \int_{S^1}f\langle A^{+},N_g' \rangle d\theta - \int_{S^1}g\langle A^{+},N_f' \rangle d\theta +\\
            &+ c_1(N_f, N_g) = \\
            &= 0 + 0 + 0 + \frac{1}{4}\int_{S^1}\text{tr}\begin{pmatrix}
                -f'' & \tilde{\lambda}f'''\\
                0 & f''
            \end{pmatrix}
            \begin{pmatrix}
                -g'' & \tilde{\lambda}g'''\\
                0 & g''
            \end{pmatrix}d\theta = \\ 
            &= \frac{1}{2}\int_{S^1}f''g' d\theta = c_3(f\partial, g\partial)
        \end{split}
    \end{equation}

    So, the desired cocycle is the Gelfand-Fuchs 2-cocycle (up to a non-zero constant). The only other thing that could happen is that the cocycle might be zero as in (\ref{Diffcocycle}). Here, we want to emphasise the following. In the case of an action of $\widetilde{\text{Diff}^+}(S^1)$, whose Lie algebra $\mathfrak{X}(S^1)$ {\it does} have a central extension $\mathfrak{vir}$ given by $c_3(\cdot, \cdot)$, there is no 2-cocycle -- see (\ref{Diffcocycle}). And in the case of the stabiliser $G_{(\eta, a)}$, whose Lie algebra is isomorphic to $\mathfrak{X}(S^1)$ and it {\it does not} have the central charge (see the beginning of the section), there is a 2-cocycle -- namely, the Gelfand-Fuchs cocycle. In other words, the cocycle appears either in Lie algebra or in a momentum map, and it depends on the embedding $\mathfrak{X}(S^1)\hookrightarrow\mathfrak{X}(S^1)\ltimes Lb$.

\paragraph{Remark.} If we substitute our functional $\eta = \begin{pmatrix}
    0 & 0\\
    1 & 0
\end{pmatrix}\in\h^*$ to $\mu_{loop}^{-1}(\eta)$, then we obtain 

\begin{equation}
    \mathcal{M}' = (\mathcal{O}^{L\overline{G}}_{\ell}\times\mathcal{O}^H_{(\eta, a)})\sslash_0 H = \left\{\begin{pmatrix}
        0 & T\\
        1 & 0
    \end{pmatrix}dx \ \ \middle| \ \ T\in C^{\infty}(S^1)\right\}.
\end{equation}
Moreover, there is an action of $\widetilde{\text{Diff}^+}(S^1)$ with the Gelfand-Fuchs 2-cocycle. In other words, the set $\mathcal{M}'$ is nothing but the space of Hill operators. This space might be regarded as a hyperplane in $\mathfrak{vir}^*$ (see \cite{GeomInfDimGr} Part II, Chapter 2).

We think we might consider this observation as a version of Drinfeld-Sokolov reduction \cite{DrinSok}. Let us recall its key steps (for a concise presentation of the subject we refer to Appendix 8 in \cite{GeomInfDimGr}): 

\begin{itemize}

    \item There is a coadjoint action $L\text{GL}_2 \curvearrowright\{-a\partial + A \ \ \vert\ \ A\in  L\mathfrak{gl}_2, a\in\R\}\subseteq\widehat{L\mathfrak{gl}_2^*}$ (one could put here an arbitrary positive integer $n$ instead of 2 but we only need $n = 2$). Hyperplanes $\{a = \text{const}\}$ are invariant under this action, so we fix $a = -1$. 

    \item The loop group $LN_{-}$ of lower triangular matrices with 1's on the diagonal still acts on the hyperplane $H_{-1} = \{\partial + A \ \ \vert \ \ A \in L\mathfrak{gl}_2\}$ by the coadjoint action. The action is Hamiltonian with the momentum map $\Phi\colon H_{-1}\longrightarrow L\mathfrak{n}_{+}$ given by the natural projection. An element $\Lambda = \begin{pmatrix}
        0 & -1 \\
        0 & 0
    \end{pmatrix}\in L\mathfrak{n}_{+}$ is fixed under the (conjugation) action of the group $LN_{-}$. 
    
    \item The hyperplane $H_{-1}$ carries a natural Poisson structure inherited from $\widehat{L\mathfrak{gl}_2}$. This Poisson structure gives some Poisson structure on the symplectic quotient $\Phi^{-1}(\Lambda)/LN_{-}$ as a result of Hamiltonian reduction. 

    \item The result of Drinfeld and Sokolov is that this Poisson structure on $\Phi^{-1}(\Lambda)/LN_{-}$ coincides with the quadratic Gelfand-Dickey structure on the space $\mathcal{L}_2$ on smooth monic 2nd-order differential operator on the circle. This space can be regarded as the space of Hill's operators that we discussed above.
\end{itemize}

We see that our procedure resembles the Drinfeld-Sokolov reduction. The main difference is that we use the Borel subgroup for the reduction instead of a nilpotent one. We believe one can say more about the relation between these two reductions.

\subsection{The second example: character variety}

Although the following example of $\mathcal{M}$ is of great interest itself, for us, it serves as a preparation for the third example. 

Fix a group $\overline{G} = \text{PSL}_2(\R)$ and its Lie algebra $\overline{\g} = \mathfrak{sl}_2(\R)$ (we use bar just to distinguish these objects from $G$ and $\g$). According to Riemann-Hilbert correspondence the character variety $\text{Rep}(\pi_1(M), \overline{G})$ describes the moduli space of flat connections on a manifold $M$, i.e.

\begin{equation}\label{RGcorr}
    \text{Rep}(\pi_1(M), \overline{G}) \cong \{flat \ \ \overline{G}-connections \ \ on \ \ M\}/gauge
\end{equation}

We will not go into details of this theorem and character varieties (they might be found, for example, in \cite{GentInt}) and will focus on the right-hand side of (\ref{RGcorr}). 

As $M$ we take a two-dimensional compact surface $\Sigma$ with one boundary component $\partial\Sigma=S^1$(for the boundary-free case see \cite{GentInt}, and \cite{Goldman} for relevant questions). We put $\mathcal{A} = \{d + A\vert A\in\Omega^1(\Sigma, \mathfrak{sl}_2(\R))\}$ -- the space of all connections of the trivial $\overline{G}$-bundle over $\Sigma$. This space possesses a symplectic form known as Atiyah-Bott form: 

\begin{equation}
    \omega_{AB}(A_1, A_2) = \int_{\Sigma}\text{tr}A_1\wedge A_2
\end{equation}

The gauge group will be $\mathcal{G} = \{g\colon\Sigma\rightarrow\overline{G}\}$ - the group of all $\overline{G}$-valued maps from $\Sigma$. This group has a normal subgroup of maps, trivial on the boundary: $\mathcal{G}_0 = \{g\colon\Sigma\rightarrow\overline{G}: g\vert_{\partial\Sigma} = e\}$. Now, let us construct our $\mathcal{M}$.  

\vspace{10pt}

First, we note that there is a Hamiltonian action $\mathcal{G}\curvearrowright\mathcal{A}$ given by gauge transformations: $g.A = gAg^{-1} - dgg^{-1}$. The corresponding momentum map is the curvature and the restriction $A_{\theta} = A\vert_{\partial\Sigma}$ of $A$ to the boundary (see \cite{GeomInfDimGr} for details): 

\begin{equation}
    \begin{split}
        \mu\colon\mathcal{A}&\longrightarrow\text{Lie}(\mathcal{G})^* \cong\Omega^2(\Sigma, \mathfrak{sl}_2(\R))\times\Omega^1(\partial\Sigma, \overline{\g})\\
        A &\longmapsto (F_A = dA + A\wedge A, A_{\theta}) \\
        \langle\mu(A), &x\rangle = \int_{S^1}\text{tr}(xF_A) - \int_{\partial\Sigma} \text{tr}(xA_{\theta}), \forall x\in\text{Lie}(\mathcal{G})
    \end{split}
\end{equation}
If $\Sigma$ did not have a boundary, there would be no cocycle. But in the case of $\partial\Sigma=S^1$ we have

\begin{equation}\label{CocycleCon}
    c_{con}(x, y) = \int_{\partial\Sigma}\text{tr}(xdy),
\end{equation}
where $x, y\in\Omega^0(\Sigma, \mathfrak{sl}_2(\R)) = \text{Lie}(\mathcal{G})$. 
If we restrict the action to the subgroup $\mathcal{G}_0$, then $x$ and $y$ from $\text{Lie}(\mathcal{G}_0)$ are zeros on the boundary, so the cocycle $c_{con}$ vanishes. Hence, according to the symplectic reduction procedure, we can define a symplectic quotient $\mathcal{A}\sslash_0\mathcal{G}_0$. This is our $\mathcal{M}$: 

\begin{equation}\label{SecExM}
    \mathcal{M} := \mathcal{A}\sslash_0\mathcal{G}_0 = \frac{\{d+A \ \  | \ \ A\in\Omega^1(\Sigma, \overline{\g})
    \big| F_A = 0\}}{\{g\colon\Sigma\rightarrow\overline{G}:g\vert_{\partial\Sigma} = e\}} =: \widetilde{\mathcal{M}}/\mathcal{G}_0
\end{equation}
This is not yet a charter variety, but close to it: one needs to carry out a reduction w.r.t. $\mathcal{G}/\mathcal{G}_0$ (see \cite{GentInt}). Now, when the symplectic manifold $\mathcal{M}$ is constructed, we can proceed to our subject (details on the construction of the space and its further study might be found in \cite{MeinWood}).
\subsubsection{Loop group}

Since there is an action $\mathcal{G}\curvearrowright\mathcal{A}$ and $\mathcal{M}$ is obtained by reduction w.r.t. normal subgroup $\mathcal{G}_0\trianglelefteq\mathcal{G}$, there is a residual action of $\mathcal{G}/\mathcal{G}_0\cong L\overline{G}$ on $\mathcal{M}$. Again, we restrict this action to a subgroup $LB\leq L\overline{G}$ and get the corresponding momentum map:

\begin{equation}\label{CVloopMM}
    \begin{split}
        \mu_{loop}\colon\mathcal{M}&\longrightarrow L\overline{\g}^* \longrightarrow Lb^*\cong Lb^{-}\\
        [A]&\longmapsto A^{\theta}\longmapsto A^{-}_{\theta}\\
        \langle\mu_{loop}(&[A], b)\rangle = \int_{\partial\Sigma}\text{tr}A^{-}_{\theta}b
    \end{split}
\end{equation}
And the cocycle (\ref{CocycleCon}) becomes

\begin{equation}
    c_{loop}(b_1, b_2) = \int_{\partial\Sigma}\text{tr}(b_1 db_2)
\end{equation}

\subsubsection{Diffeomorphisms of a circle}

One can consider an action on $\widetilde{\mathcal{M}}$ by the group $\text{Diff}_{\circ}(\Sigma)$ of diffeomorphisms of $\Sigma$ isotopic to identity. This action is basically a change of coordinates on the surface and it descends to $\mathcal{M}$. Now, note that normal subgroup $\leftindex^0 {\text{Diff}_{\circ}(\Sigma)}\trianglelefteq\text{Diff}_{\circ}(\Sigma)$ fixing the boundary acts trivially on $\mathcal{M}$. Hence, there is a residual action on $\mathcal{M}$ by $\widetilde{\text{Diff}^+}(\partial\Sigma) = \text{Diff}_{\circ}(\Sigma)/\leftindex^0 {\text{Diff}_{\circ}(\Sigma)}$: 

\begin{equation}
    \varphi.[A] = [\Phi.(\Phi_0.A)],
\end{equation}
where $\Phi\in\text{Diff}_{\circ}(\Sigma)$, $\Phi_0\in\leftindex^0 {\text{Diff}_{\circ}(\Sigma)}$. We get the corresponding momentum map $\mu_D$ using the momentum map (\ref{CVloopMM}) for the loop group. The action of $f\partial\in\mathfrak{X}(S^1)$ on $A_{\theta}$ might be rewritten in terms of an action of $X_f = \iota_{f\partial} A_{\theta}\in L\bar{\g}$ in the following way (compare it to (\ref{Mprime})): 

\begin{equation}
    \begin{split}
        \mathcal{L}_{f\partial}A_{\theta} &= d(\iota_{f\partial} A_{\theta}) + \iota_{f\partial}dA_{\theta} = \textcolor{blue}{/dA + [A, A] = 0/}\\
        &= dX_f - \frac{1}{2}\iota_{f\partial}[A, A] = dX_f - [X_f, A_{\theta}] = \\
        &= \text{ad*}_{X_f}A_{\theta}
    \end{split}
\end{equation}
So, the momentum map $\mu_D$ now reads as

\begin{equation}
    \begin{split}
        \langle\mu_D(A), f\partial\rangle &= \langle\mu_{loop}(A), X_f\rangle = \int_{\partial\Sigma}\text{tr}(dX_f - [X_f, A_{\theta}]) =\\
        &= 0 - \frac{1}{2}\int_{\partial\Sigma}\text{tr}(\iota_{f\partial}A_{\theta}\wedge A_{\theta}) = -\frac{1}{2}\int_{\partial\Sigma}\text{tr}fA_{\theta}^2
    \end{split}
\end{equation}
So, we have

\begin{equation}\label{CVDiffMM}
    \begin{split}
    \mu_D\colon\mathcal{M}&\longrightarrow\mathfrak{X}(S^1)^*\\
    [A]&\longmapsto -\frac{1}{2}\text{tr}A_{\theta}^2
    \end{split}
\end{equation}
\subsubsection{Stabiliser}

Compare momentum map (\ref{LoopMM}) with (\ref{CVloopMM}) and (\ref{DiffMM}) with (\ref{CVDiffMM}). We see that they are basically the same. Then, the cocycle for the stabiliser $G_{(\eta, a)}$ is the Gelfand-Fuchs 2-cocycle as in (\ref{FinalCoc}).

\subsection{The third example: Teichmüller space}

The following example is the main motivation for the present work. Recall that it is taken from the paper \cite{AlArt} and we refer to it for the details. The example is similar to the previous one. The difference is that the space $\widetilde{\mathcal{M}}$ in (\ref{SecExM}) is "smaller" and the corresponding gauge group is also "smaller". Let us be more precise. 

\begin{itemize}
    \item As before we start with a compact two-dimensional surface $\Sigma$ with one boundary component $\partial\Sigma = S^1$. Then we consider a space $\text{Hyp}(\Sigma)$ of all hyperbolic structures on $\Sigma$. A hyperbolic structure on $\Sigma$ is an oriented atlas with $\overline{\mathbb{D}}$-valued charts, with constant transition maps given by elements of $G = \text{PSU}(1,1)$. One can see a hyperbolic structure as a hyperbolic 0-metric $\mathsf{g}$, where "0" refers to a certain behaviour of a metric near the boundary (namely, as for the standard hyperbolic metric on the Poincaré disk). The Teichmüller space is the quotient

    \begin{equation}
        \text{Teich}(\Sigma) = \text{Hyp}(\Sigma)/\leftindex^0 {\text{Diff}_{\circ}(\Sigma)},
    \end{equation}
    where $\leftindex^0 {\text{Diff}_{\circ}(\Sigma)}$ is the group of diffemorphisms of $\Sigma$ isotopic to identity and fixing the boundary. 

    \item Using the coframe formalism (Section 3 in \cite{AlArt}), for a given 0-metric $\mathsf{g} = (\alpha_1)^2 + (\alpha_2)^2$ one can construct a $\mathfrak{sl}_2(\R)$-valued connection 1-form $A = \frac{1}{2}\begin{pmatrix}
        \alpha_2 & \alpha_1 - \kappa\\
        \alpha_1 + \kappa & -\alpha_2
    \end{pmatrix}$, where $\alpha_1, \alpha_2, \kappa\in\leftindex^0{\Omega^1(U, \mathfrak{sl}_2(\R))}$ for an open subset $U\subseteq\Sigma$. Then, one can show that a 0-metric $\mathsf{g}$ is hyperbolic if and only if an associated connection $A$ is flat, i.e. $F_A = dA + \frac{1}{2}[A, A] = 0$. This is how we pass from hyperbolic 0-metrics to flat connections. 

    \item Then we consider a space 

    \begin{equation}
        \widehat{\text{Teich}}(\Sigma) = \mathcal{A}_{flat}^{pos}(P)/\text{Gauge},
    \end{equation}
    where $\mathcal{A}_{flat}^{pos}(P)$ is the space of flat positive connections of the principal $G$-bundle $P$ over $\Sigma$. This is an analogue of $\widetilde{\mathcal{M}}$ from the previous example: we still have a space of flat connections, but now they also satisfy certain positivity condition, which we do not discuss here. The corresponding gauge group is also smaller since it must preserve this positivity.

    \item Next, Theorem 6.7 in \cite{AlArt} tells us that the Teichmüller space is a symplectic quotient 

    \begin{equation}
        \text{Teich}(\Sigma) = (\widehat{\text{Teich}}(\Sigma)\times\mathcal{O^{-}})\sslash_0\text{Gau}(Q, \tau),
    \end{equation}
    where $\text{Gau}(Q, \tau)$ is our subgroup $H = LB \trianglelefteq \widetilde{\text{Diff}}^+(S^1) \ltimes L B = G$. So we see that this is exactly the setup of our main theorem for $\mathcal{M} = \widehat{\text{Teich}}(\Sigma)$ and $\mathcal{M}' = \text{Teich}(\Sigma)$.
\end{itemize}

\subsubsection{The momemtum map and the cocycle}

As before, to compute a 2-cocycle for a stabiliser $G_{(\eta, a)}\cong\widetilde{\text{Diff}}^+(S^1)$ we need to know a momentum map for the group $G = \widetilde{\text{Diff}}^+(S^1) \ltimes L B$. The map is computed in the paper \cite{AlArt} (see Proposition 7.7): 

\begin{equation}
    \begin{split}
        \mu\colon \widehat{\text{Teich}}(\Sigma)&\longrightarrow\g^*\\
        [\theta]&\longmapsto\left(\frac{1}{2}\text{tr}A^2, A\right),
    \end{split}
\end{equation}
where $A$ is a trivialisation of the connection $\theta$. We see that the momentum map is exactly the same as in previous examples (the only difference is the sign of the first component, but this does not influence computations). Thus, we can conclude that the desired 2-cocycle for the action of the stabiliser will be the Gelfand-Fuchs 2-cocycle.

\bibliographystyle{abbrv}

\end{document}